\DeclareMathOperator{\negl}{negl}
\newtheorem{theorem}{Theorem}
\newtheorem{definition}{Definition}
\newtheorem*{Game*}{Game}
\newtheorem*{TokGen*}{TokenGeneration Phase}
\newtheorem*{TokVer*}{TokenVerification Phase}
\newtheorem*{Setup*}{Setup}
\newtheorem*{Query*}{Query}
\newtheorem*{Challenge*}{Challenge}
\newtheorem*{Guess*}{Guess}
\theoremstyle{remark}
\newtheorem{remark}[theorem]{Remark}
\pgfplotsset{compat=1.18}
\begin{document}

\title{Quantum Physical Unclonable Function based on Chaotic Hamiltonians}

\author{$^{*}$Soham Ghosh, $^{*}$Holger Boche, $^{\dagger}$Marc Geitz}%
\affiliation{$^{*}$Chair of Information Technology$,$ Technical University of Munich$,$ Germany \\
 $^{\dagger}$T-Labs$,$ Deutsche Telekom AG$,$ Germany
}

\date{\today}

\begin{abstract}
Quantum Physical Unclonable Functions (QPUFs) are hardware-based cryptographic primitives with strong theoretical security. This security stems from their modeling as Haar-random unitaries. However, implementing such unitaries on Intermediate-Scale Quantum devices is challenging due to exponential simulation complexity. Previous work tackled this using pseudo-random unitary designs but only under limited adversarial models with only black-box query access. 

In this paper, we propose a new QPUF construction based on chaotic quantum dynamics. We modeled the QPUF as a unitary time evolution under a chaotic Hamiltonian and proved that this approach offers security comparable to Haar-random unitaries. Intuitively, we show that while chaotic dynamics generate less randomness than ideal Haar unitaries, the randomness is still sufficient to make the QPUF unclonable in polynomial time. Moreover, we show that the evolution time required to achieve security scales linearly with number of qudits used in the scheme and can be kept public.

We identified the Sachdev-Ye-Kitaev (SYK) model as a candidate for the QPUF Hamiltonian. 
Recent experiments using nuclear spins and cold atoms have shown progress toward achieving this goal. Inspired by recent experimental advances, we present a schematic architecture for realizing our proposed QPUF device based on optical Kagome Lattice with disorder. For adversaries with only query access, we also introduce an efficiently simulable pseudo-chaotic QPUF. Our results lay the preliminary groundwork for bridging the gap between theoretical security and the practical implementation of QPUFs for the first time.
\end{abstract}

\maketitle

\section{Introduction}

Quantum Physical Unclonable Functions (QPUFs) are a promising cryptographic primitive introduced in recent literature \cite{Skori2010,Arapinis2021quantumphysical,Ghosh2024}. In recent work \cite{Arapinis2021quantumphysical}, the hardware requirements for QPUF were formally defined. Modelling the QPUF as a Haar random unitary channel, several security guarantees were proved. Specifically, it was shown that no QPUF can remain secure against unbounded quantum adversaries running in Quantum Exponential Time (QET) (see \cref{Section: QPT}). However, against Quantum Polynomial Time (QPT) adversaries (see \cref{Section: QPT}), meaningful security notions were proved to be achieved. These include \emph{selective unforgeability} \cite{Arapinis2021quantumphysical} and \emph{measurement-selective unforgeability} \cite{Ghosh2024}. 

Importantly, QPUFs offer information-theoretic security under exact implementation of the protocols outlined in \cite{Arapinis2021quantumphysical, Ghosh2024} (see also \cref{Section: QPT} and \cref{SEC:III}). In contrast, classical PUFs (CPUFs) remain vulnerable to a range of attacks, including direct physical cloning \cite{helfmeier2013cloning}, and modeling attacks based on machine learning techniques \cite{ katzenbeisser2012pufs,ruhrmair2010modeling,ruhrmair2014efficient,ganji2015let,ganji2016strong,ganji2015attackers,ganji2016pac,santikellur2019deep}. Additionally, the quantum no-cloning theorem inherently protects the challenge-response pairs in QPUFs, providing another fundamental layer of security absent in classical counterparts.

However, exact implementations of Haar-random unitaries are infeasible on quantum devices due to exponential resource requirements, and no known physical system can realize them efficiently. As a result, QPUFs have remained largely theoretical.

To address this, prior work \cite{SWAP_Mina} introduced pseudo-Haar unitary $t$-designs that approximate Haar randomness efficiently on quantum devices. No computationally efficient algorithm with only black-box query access can distinguish a unitary sampled from a \( t \)-design from one sampled from the Haar measure. While this improves the practicality of QPUFs, the resulting constructions are only secure against QPT adversaries with access limited to black-box queries. For real-world cryptographic applications, broader adversarial models with more than query access are essential.

\subsection{Our Contributions}

In this work, we propose a novel, physically motivated QPUF construction based on chaotic quantum dynamics. We model the QPUF as the unitary time evolution generated by a chaotic Hamiltonian and prove that the resulting unitary satisfies the same security guarantees as Haar-random ones. Moreover, we show that the evolution time required to achieve security scales linearly with number of qudits and can be kept public. This approach is intuitive as the unclonability of the QPUF arises from the inherent randomness of chaotic dynamics. 

We further demonstrate that our chaotic QPUF model can be implemented using the Sachdev–Ye–Kitaev (SYK) Hamiltonian \cite{sachdevChaos,kitaeChaos}, a well-known model of quantum chaos with profound links to both black hole physics and condensed matter systems. While the SYK model features complex all-to-all interactions, recent experimental progress with nuclear spins \cite{SYK_Nuc} and ultracold atoms in optical lattices \cite{SYK_Optical} has shown that SYK-like Hamiltonians can be effectively realized using local, nearest-neighbour interactions. As a proof of concept, we present a schematic QPUF device architecture (see \cref{Sec:V}) based on a disordered Kagome lattice \cite{SYK_Optical}. Therefore, our construction opens the avenue for modeling attack scenarios beyond query access. 

Finally, as an alternative to pseudo-Haar random constructions, we propose a pseudo-chaotic QPUF construction, that is secure against adversaries with query-restricted access. The sampling procedure involves generating a unitary from a \( t \)-design as a subroutine, augmented by an additional layer of randomness through random eigenvalue sampling. An interesting open question is whether this added randomness yields any tangible advantage from using just pseudo-Haar unitaries as a model for QPUFs.

\subsection{Paper Organization}
This paper is organized as follows. In \cref{Sec:II} and \cref{SEC:III}, we briefly review the security notions of \emph{selective unforgeability}\cite{Arapinis2021quantumphysical} and \emph{measurement-selective unforgeability} \cite{Ghosh2024}. \cref{SEC:IV} presents our QPUF construction based on chaotic Hamiltonians and proves its security under both measures. \cref{Sec:V} and \cref{Sec:VI} describe two practical implementation approaches: a physical realization via SYK models and a simulative method using pseudo-chaotic Hamiltonians. Finally, \cref{Sec:VII} discusses the advantages and limitations of our proposals and outlines directions for future work.

\section{Brief Review of Selective Unforgeability Scheme \cite{Arapinis2021quantumphysical}}
\label{Sec:II}
The selective unforgeability security measure was introduced in \cite{Arapinis2021quantumphysical}. For clarity, we restate it in a form more convenient for our analysis. In this scheme, the QPUF is represented by a $D$-dimensional unitary channel $U$ drawn from the Haar measure over the unitary group. It also satisfies certain hardware requirements, namely, \emph{robustness} and \emph{collision-resistance}, which are described in \cite{Arapinis2021quantumphysical}.

The security measure is described through an authentication protocol where a verifier attempts to authenticate the identity of an honest prover. As shown in 
\cref{fig:Selective Unforgeability}, the authenticatoin protocol is structured into two phases: an \emph{enrollment phase} and a \emph{verification phase}.

\begin{figure}
    \centering
    \includegraphics[scale= 0.4]{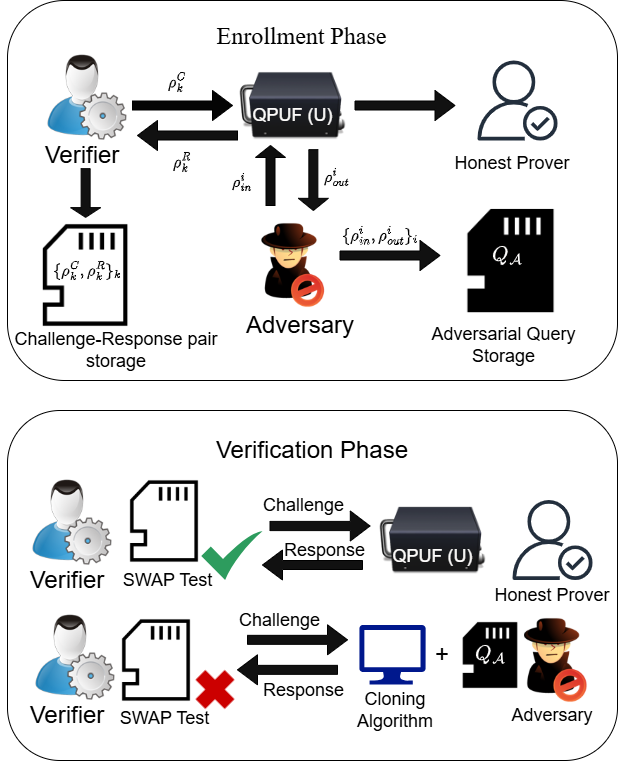}
    \caption{Selective Unforgeability. In the \emph{enrollment phase}, the verifier and the adversary collect and store information from the QPUF device by querying it. Then, the QPUF device is handed over to the honest prover. In the \emph{verification phase}, the verifier sends a challenge state and the adversary attempts to send back the correct response. The validity of the response is measured by a SWAP Test\cite{SWAP_Mina}.}
    \label{fig:Selective Unforgeability}
\end{figure}

During the \emph{enrollment phase}, the verifier prepares two copies of $M$ quantum input (\textit{challenge}) states, expressed as, 
\begin{equation}
    C \coloneqq \{\left(\rho_{k}^{C}\right)^{\otimes2} = U_k\ketbra{0}{0}U_{k}^{\dagger}\}_{k \in \mathbb{Z}_{M}},
\end{equation}
where each $U_k$ is independently sampled according to the Haar measure on the unitary group. The verifier applies the QPUF $U$ on one copy of these states $\{\rho_{k}^{C}\}_{k\in\mathbb{Z}_{M}}$, obtaining the output (\textit{response}) states, 
\begin{equation}
\label{eq: Sel Res}
    R \coloneqq \{\rho_{k}^{R} = U U_k\ketbra{0}{0}U_{k}^{\dagger}U^{\dagger}\}_{k \in \mathbb{Z}_{M}},
\end{equation}
and stores these states with their corresponding copy of the challenge states, forming a \emph{challenge-response} database $CRP$,
\begin{equation}
    CRP \coloneqq \{\left(\rho_{k}^{C},\rho_{k}^{R}\right)\}_{k \in \mathbb{Z}_{M}}.
\end{equation}
 
In the \emph{verification phase}, the verifier sends a challenge state from $CRP$ to the honest prover. The prover uses the QPUF to generate $M$ response states, represented as $\{\rho^P_k\}_{k \in \mathbb{Z}_{M}}$. Subsequently, the verifier conducts $M$ independent SWAP tests \cite{Arapinis2021quantumphysical,SWAP_Mina} between the prover's response states and the stored response states in $CRP$. The verification is accepted if and only if all SWAP tests are passed; otherwise, it is rejected.

\begin{figure*}
    \centering
    \includegraphics[scale = 0.8]{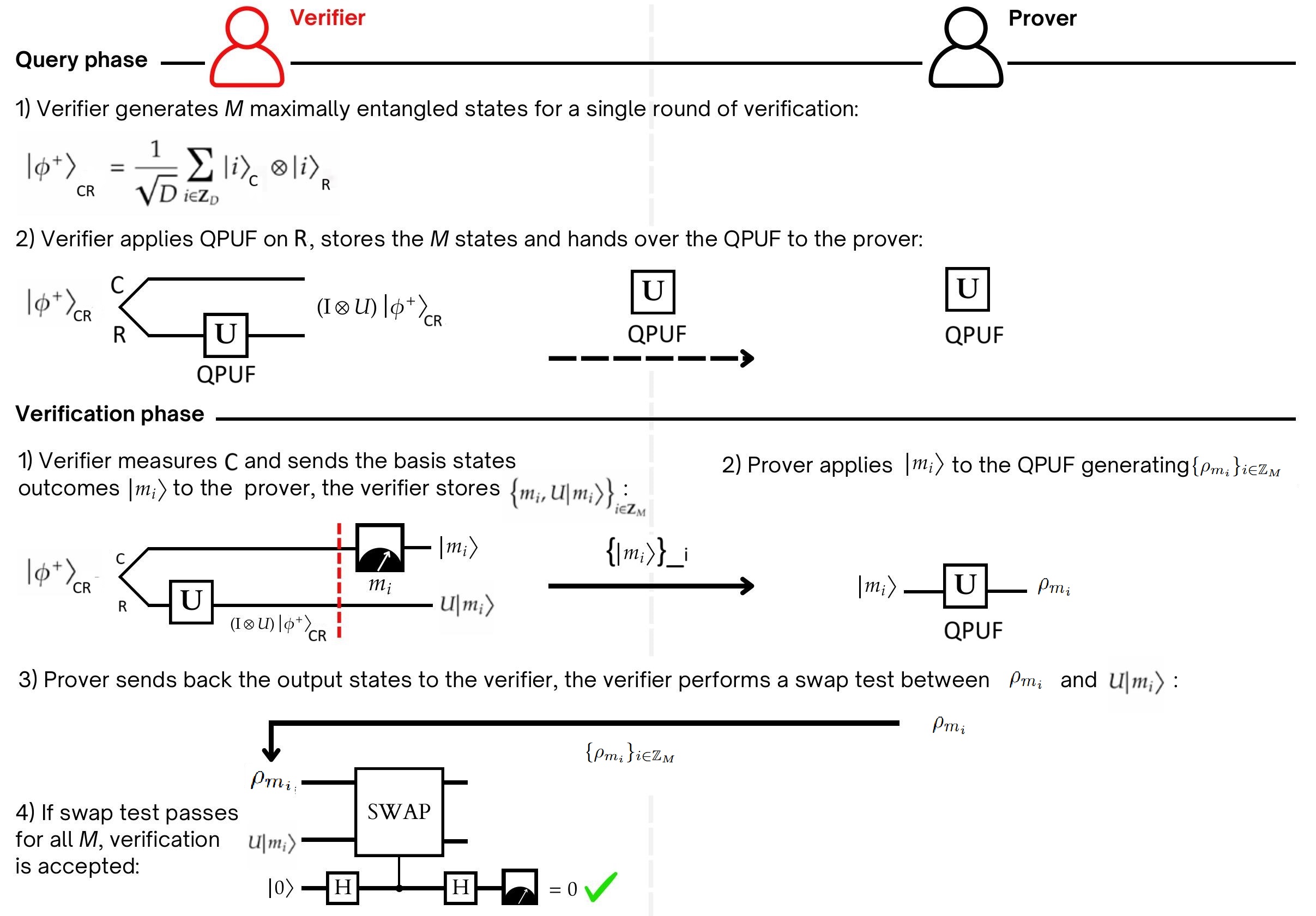}
    \caption{Single Round verification with $M$ trials. The query phases describes how the verifier stores $M$ Choi states of the QPUF $U$ and transfers the device to the prover. The verification phase describes, the challenge generation via measurements and subsequent verification with SWAP test. This figure is taken from \cite{Ghosh2024}.}
    \label{fig:MB-QPUF}
\end{figure*}

\subsection{Security}
\label{Section: QPT}
Let $\mathcal{A}$ be an arbitrary adversary with query access to a QPUF described by a unitary operation $U$. Such an adversary can construct a query database given by:
\begin{equation}
Q_{\mathcal{A}}(U) \coloneqq \{\rho^{i}_{\textrm{in}}, \rho^{i}_{\textrm{out}} \equiv U\rho^{i}_{\textrm{in}}U^{\dagger}\}_{i \in \mathbb{Z}_{|Q{\mathcal{A}}|}}.
\end{equation}
Given a security parameter \cite{Ghosh2024} $\lambda$, we classify the adversary according to the size of the query set. If $|Q_{\mathcal{A}}(U)| = \text{poly}(\lambda)$, the adversary is termed a Quantum Polynomial-Time (QPT) adversary; if $|Q_{\mathcal{A}}(U)| = \exp(\lambda)$, it is a Quantum Exponential-Time (QET) adversary. (The security parameter is typically considered to be the number of qudits used in the QPUF system.)

For a given challenge state $\rho^{C}_{k}$, let $P(Q_{\mathcal{A}}(U),\rho^{C}_{k})$ denote the adversary's probability of producing a state $\rho^{*}$ that passes the SWAP test with the correct response state $\rho^{R}_{k}$. This leads us to the definition of \textit{selective unforgeability}:

\begin{definition}[Selective Unforgeability]
A unitary QPUF $U$ is \emph{selectively unforgeable} if, for every given challenge state $\rho_{k}^C \in C$, the probability $P(Q_{\mathcal{A}}(U),\rho^{C}_{k})$ of generating the correct response state $\rho_{k}^R \in R$ for an adversary is negligible in the security parameter $\lambda$.
\end{definition}
In \cite{Arapinis2021quantumphysical}, it was proven that any Haar unitary QPUF is \textit{selectively unforgeable}.

\section{Brief review of Measurement Based QPUF (MB-QPUF) Selective Unforgeability scheme \cite{Ghosh2024}}
\label{SEC:III}

The MB-QPUF scheme was introduced in \cite{Ghosh2024}. For clarity, we restate it in a form more convenient for our analysis. Following the approach proposed in \cite{Arapinis2021quantumphysical}, the unitary QPUF is modeled as a Haar-random unitary \( U \). As shown in \cref{fig:MB-QPUF}, the MB-QPUF protocol consists of two distinct phases: the \emph{query phase} and the \emph{verification phase}.

\subsection{Verifier's Query Phase:}
The data collection of the verifier is described in the following steps:

\begin{itemize}
    \item Given $N,M = poly(\lambda)$(\textit{lowest degree of the polynomial is at least $1$}), the verifier initialises $NM$ many maximally entangled states,$\ket{\Phi^{+}}_{CR}$, on two $D-$dimensional systems labelled $C,R$, where,
    \begin{equation}
    \ket{\Phi^{+}}_{CR} \equiv \frac{1}{\sqrt{D}}\sum_{i \in \mathbb{Z}_{D}} \ket{i}_{C} \otimes \ket{i}_{R}. 
    \end{equation}
    \item Then they apply the QPUF unitary channel $U$ locally on system $R$ of all the maximally entangled states to obtain the following states,
    \begin{equation}
    \left(\mathbb{I} \otimes U \right)\ket{\Phi^{+}}_{CR} = \frac{1}{\sqrt{D}}\sum_{i \in \mathbb{Z}_{D}} \ket{i}_{C} \otimes U \ket{i}_{R} 
    \end{equation}
    \item Finally they store these states and hand over the QPUF $U$ to the prover.
\end{itemize}
System $C$ is used for challenge generation, while system $R$ is employed for response verification, as detailed in the next subsection.

\subsection{Prover's Verification}
We outline a single round of verification. The verifier performs $M$ measurements on the challenge-generation system $C$, producing outcomes $\{m_{i}\}_{i \in \mathbb{Z}_{M}}$ that are communicated to the prover. This measurement step collapses the quantum states in the response system $R$ into states $\{U\ket{m_i}\}_{i \in \mathbb{Z}_{M}}$.

In response, the prover sends back $M$ quantum states $\{\rho_{m_i}\}_{i\in \mathbb{Z}_{M}}$. The verifier then executes $M$ independent SWAP tests \cite{SWAP_Mina}, each comparing a state from system $R$ with the corresponding response state received from the prover. Verification succeeds only if all $M$ SWAP tests pass; otherwise, it is rejected.

\subsection{Security}
Let $\mathcal{A}$ be any arbitrary adversary with query access to the QPUF $U$. Then they would be able to create a query database,
\begin{equation}
    Q_{\mathcal{A}} \coloneqq \{\rho^{i}_{in}, \rho^{i}_{out} \equiv U\rho^{i}_{in}U^{\dagger}\}_{i \in \mathbb{Z}_{|Q_{\mathcal{A}}|}}. \notag
\end{equation}
Let the verifier's measurement outcomes be denoted as \(\{m_{i}\}_{i \in \mathbb{Z}_{M}}\). The winning strategy of the Adversary for generating states that passes all the $M$ rounds of the SWAP test with the authentic response states, could in general be an entangled state over $M$ systems. However, in \cite{SWAP_Mina} it was proven that the best strategy for the adversary would be to send separable states. Therefore, let the adversary prepare $M$ many guess states $\{\tilde{\rho^{i}}_{\mathcal{A}}\}_{i \in \mathbb{Z}_{M}}$.
We can then write the overall winning probability of the adversary for a single round of verification,
\begin{align}
\label{eq: MB-QPUF prob}
    P_{\mathcal{A}} (U) = \prod_{i \in \mathbb{Z}_{M}} \left(\frac{1}{2} + \frac{1}{2}\left| \Tr[\tilde{\rho^{i}}_{\mathcal{A}}U\ketbra{m_{i}}U^{\dagger}] \right|^{2} \right) 
\end{align}
With this setup, \textit{measurement selective unforgeability} was defined and proved for Haar unitary QPUF in \cite{Ghosh2024}. 

\begin{definition}[Measurement Selective Unforgeability]
  A unitary QPUF $U$, when used in the MB-QPUF model described above, is defined as \textit{measurement selective unforgeable} if,
  for any QPT adversary with query access to the QPUF, the overall success probability \( P_{\mathcal{A}}(U) \) for a single round of verification is at most negligible in the security parameter.
\end{definition}
\cref{Theorem 1}, proven in \cite{Ghosh2024}, establishes that Haar-random unitary QPUFs are \textit{measurement selective unforgeable}.

\begin{theorem}[Measurement Selective Unforgeability \cite{Ghosh2024}]
\label{Theorem 1}
For any security parameter \(\lambda\) and the number of trials \(M = poly(\lambda)\), in a single verification round, the expected success probability of any adversary is bounded by:
\[
\mathbb{E}_{U \sim Haar}[P_{\mathcal{A}}(U)] \leq \frac{1}{2^M} + \negl(\lambda) = \negl(\lambda).
\]
Consequently, any MB-QPUF scheme is measurement-selective unforgeable.
\end{theorem}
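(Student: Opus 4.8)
The plan is to bound $\mathbb{E}_{U\sim\mathrm{Haar}}[P_{\mathcal{A}}(U)]$ starting from the product expression in \eqref{eq: MB-QPUF prob}. Since the adversary's optimal strategy is to submit separable guess states $\{\tilde\rho^i_{\mathcal A}\}$, and since (conditioned on the classical outcomes $\{m_i\}$) the relevant quantity $\Tr[\tilde\rho^i_{\mathcal A}\,U\ketbra{m_i}U^\dagger]$ depends on $U$ only through its action on the basis vector $\ket{m_i}$, the first step is to push the expectation inside the product. The factors are not independent in general — the same $U$ appears in every factor — so I would instead bound each factor pointwise and then take the expectation of the product, or use the fact that for fixed $\{m_i\}$ and fixed guesses the function $U\mapsto \prod_i(\tfrac12+\tfrac12|\Tr[\tilde\rho^i_{\mathcal A}U\ketbra{m_i}U^\dagger]|^2)$ is dominated by $\prod_i \mathbb{E}_U[\tfrac12+\tfrac12 F_i(U)]$ via Hölder/Cauchy–Schwarz on the Haar measure if one first argues the $F_i$ behave like weakly-correlated variables, or simply by expanding the product into a sum over subsets $S\subseteq\mathbb{Z}_M$ of terms $2^{-M}\,\mathbb{E}_U\big[\prod_{i\in S}|\Tr[\tilde\rho^i_{\mathcal A}U\ketbra{m_i}U^\dagger]|^2\big]$.

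The core estimate is then: for each $i$, $\mathbb{E}_{U\sim\mathrm{Haar}}\big[|\Tr[\tilde\rho^i_{\mathcal A}\,U\ketbra{m_i}U^\dagger]|^2\big]$ is the average fidelity-type overlap of a fixed state $\tilde\rho^i_{\mathcal A}$ with a Haar-random pure state $U\ket{m_i}$, which by the standard first-moment (and, for the subset terms, higher-moment) integrals over the unitary group is $O(1/D)$, i.e. $\mathrm{negl}(\lambda)$ since $D=2^\lambda$ (or $d^\lambda$). The subtlety is that the adversary chose $\tilde\rho^i_{\mathcal A}$ \emph{after} seeing the query database $Q_{\mathcal A}$ and the outcomes $\{m_i\}$; I would handle this exactly as in \cite{Ghosh2024}, by noting that $Q_{\mathcal A}$ is generated from at most $\mathrm{poly}(\lambda)$ queries, so conditioned on it the QPUF action on the challenge direction $\ket{m_i}$ (which is fresh randomness from the verifier's measurement, independent of what the adversary queried) is still Haar-distributed on the orthogonal complement of the queried subspace — up to a $\mathrm{poly}(\lambda)/D$ correction. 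Averaging the expansion, each nonempty-$S$ term contributes at most $\mathrm{negl}(\lambda)$ and the $S=\emptyset$ term contributes exactly $2^{-M}$, giving $\mathbb{E}_U[P_{\mathcal A}(U)]\le 2^{-M}+\mathrm{negl}(\lambda)$.

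Finally, since $M=\mathrm{poly}(\lambda)\ge 1$, the term $2^{-M}$ is itself $\mathrm{negl}(\lambda)$, so the bound collapses to $\mathrm{negl}(\lambda)$, establishing measurement-selective unforgeability. The main obstacle I anticipate is the correlation between the $M$ factors under a single Haar draw of $U$: a naive "expectation of a product equals product of expectations" is false here, so the argument must either go through the subset expansion together with higher-moment Weingarten estimates, or invoke the result from \cite{SWAP_Mina,Ghosh2024} that reduces the analysis to independent single-round contributions. I would lean on the latter to keep the proof short, citing \cref{Theorem 1}'s original argument for the moment computation and focusing the new content on why the chaotic-Hamiltonian construction reproduces the same Haar-moment bounds (which is presumably the point of the later sections).
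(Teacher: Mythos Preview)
The paper does not actually prove \cref{Theorem 1}; it is quoted verbatim from \cite{Ghosh2024} and used only as input. The closest thing to a proof in the present paper is the argument for the GUE analogue, \cref{Thm:GUE security}, which opens with ``Following the arguments used in the proof of Theorem~1 in \cite{Ghosh2024}, it suffices to show that $\mathbb{E}_U\big[\Tr(\rho^{i}_{\mathcal A}\,U\ketbra{m_i}U^\dagger)\big]$ is negligible,'' and then handles the adversary's prior queries exactly as you describe, by passing to a Haar-random unitary on the $(D-|Q_{\mathcal A}|)$-dimensional complement of the queried subspace. So your sketch --- reduce to a single-round overlap bound, control that bound via the orthogonal-complement argument, reassemble --- is the same approach the paper relies on.

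Two small corrections. First, you do not need higher-moment Weingarten calculus for the subset expansion: each $X_i:=\Tr[\tilde\rho^i_{\mathcal A}\,U\ketbra{m_i}U^\dagger]$ lies in $[0,1]$, so for nonempty $S$ one has $\prod_{i\in S}X_i\le X_j$ for any $j\in S$, and a single first-moment bound $\mathbb{E}_U[X_j]\le\epsilon$ already gives $2^{-M}\sum_{S\ne\emptyset}\mathbb{E}_U\big[\prod_{i\in S}X_i\big]\le (1-2^{-M})\epsilon$. Your phrasing ``each nonempty-$S$ term contributes at most $\negl(\lambda)$'' is unsafe as written, since there are $2^M=2^{\mathrm{poly}(\lambda)}$ such terms and a sum of exponentially many negligible quantities need not be negligible; the $2^{-M}$ prefactor must be kept explicit in the bookkeeping. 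Second, your closing remark about ``focusing the new content on why the chaotic-Hamiltonian construction reproduces the same Haar-moment bounds'' belongs to \cref{Thm:GUE security}, not to \cref{Theorem 1}; for the statement you were asked to prove there is nothing beyond the cited argument.
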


\begin{remark}
    The function $\negl(\lambda)$ denotes any class of functions that decay  exponentially in $\lambda$, i.e. $\mathcal{O}(\frac{1}{exp(\lambda)})$. 
\end{remark}

\section{Chaotic Hamiltonian based QPUF}
\label{SEC:IV}
Previous security models for unitary QPUFs \cite{Arapinis2021quantumphysical, Ghosh2024} rely on the assumption that the QPUF unitary is drawn from the Haar-random ensemble. However, implementing such ensembles on NISQ devices remains a significant challenge.  

To address this, we propose an alternative approach where the QPUF unitary channel is generated via the time evolution of a quantum chaotic Hamiltonian \( H \):  
\begin{equation}
   U(t) = e^{-iHt}. 
\end{equation}
Here, \( t \) represents the processing time of the QPUF device, which can either be fixed for a given family of QPUFs or publicly known to an adversary. It is straightforward to show that any unitary generated by this construction satisfies all the hardware requirements outlined in page 7 of~\cite{Arapinis2021quantumphysical}. The \emph{unknownness} requirement follows from the randomness of the Hamiltonian \( H \), while \emph{collision resistance} and \emph{robustness} follow from the sensitive dependence on initial conditions of chaotic dynamics.

To analyze this construction, we model the chaotic Hamiltonian as a random Hermitian matrix \( H \) drawn from the Gaussian Unitary Ensemble (GUE) \cite{beenakker1997random,hu2022hamiltonian} in a Hilbert space of dimension \( D = d^{\lambda} \), following the probability distribution:  
\begin{equation}
P(H) \propto \exp\left(-\frac{D}{2} \mathrm{Tr} H^2\right).
\end{equation}  
The energy scale is normalized such that the spectral density(i.e. joint probability density of the eigenvalues) of \( H \) follows the Wigner semicircle law \cite{beenakker1997random}, with spectral radius $2$, in the large \( D \) limit:  
\begin{equation}
\rho(E) = \frac{1}{2\pi} \sqrt{4 - E^2}, \quad E \in [-2,2].
\end{equation}  
The spectral $2$-point correlation function (i.e. marginal density over remaining $D-2$ eigenvalues) between the $p^{th}$ and $q^{th}$ eigen energies is given by \cite{cotler2017chaos},
\begin{align}
    &\rho^{(2)}(E_{p},E_{q}) \notag\\ &= \frac{D^{2}}{D(D-1)}\left(\rho(E_p)\rho(E_q) - \frac{\sin^{2}(D(E_p - E_q))}{(D\pi(E_p - E_q))^2} \right)
\end{align}
It can be easily seen that in the large $D$ limit, we have,
\begin{equation}
    \rho^{(2)}(E_p, E_q) \approx \rho(E_p)\rho(E_q). \label{eq: eigen independence}
\end{equation}
This implies that the eigenvalues become approximately pairwise uncorrelated in large dimensions. As we will demonstrate later, this property will play a crucial role in the security proof.

The unitary time evolution generated by \( H \) admits the eigen-decomposition:  
\begin{equation}
U(t) = e^{-iHt} = V \Lambda(t) V^{\dagger}, \label{eq: eigen decomp}
\end{equation}  
where \( V \) is the unitary eigenvector matrix of \( H \), and \( \Lambda(t) \) is the diagonal matrix,
\begin{equation}
\label{eq: Lambda eigen}
\Lambda(t)_{pq} = e^{-iE_p t}\delta_{pq},    
\end{equation}
with $E_p$ being the $p^{th}$ eigen energy of $H$ (working with standard unit where $\hbar = 1$). For a GUE Hamiltonian $H$, the matrix $V$ is Haar random and is independently distributed from the matrix $\Lambda(t)$. With this setup, we will now proceed to present the security results for our proposed model in the following theorems.
\begin{theorem}[Measurement Selective Unforgeability]
Upon replacing the Haar-random unitary in the MB-QPUF scheme \cite{Ghosh2024} with a chaotic QPUF unitary with evolution time $t = \lambda >> 1$, the expected probability of success for an adversary is bounded by:
\begin{align}
    \mathbb{E}_{U \sim \textrm{GUE}}(P_{\mathcal{A}}(U)) &\leq \frac{1}{2^M} + \left( \frac{1}{t -1} \right)\cdot \mathcal{O}(\frac{1}{D - |Q_{\mathcal{A}}|}), \notag \\
    &= \frac{1}{2^M} + \left( \frac{1}{\lambda -1} \right)\cdot \mathcal{O}(\frac{1}{q^{\lambda} - |Q_{\mathcal{A}}|}), \notag \\
    &= \frac{1}{2^M} + \negl(\lambda)
\end{align}
where \( \lambda \) is the security parameter and  \( M = \text{poly}(\lambda) \) is the number of trials in a single verification round.
\end{theorem}

\begin{proof}
Following the arguments used in the proof of Theorem 1 in \cite{Ghosh2024}, it suffices to show that the quantity
\begin{equation}
\mathbb{E}_{U}\left[\Tr\left(\rho^{i}_{\mathcal{A}}U\ketbra{m_{i}}U^{\dagger}\right)\right]
\end{equation}
is negligible in the security parameter \(\lambda\). 

Recall the eigen-decomposition of the unitary \(U\) from \eqref{eq: eigen decomp}:
\begin{equation}
U(t) = e^{-iHt} = V \Lambda(t) V^{\dagger},
\end{equation}
where \(V\) is a Haar-random unitary matrix and \(\Lambda(t)\) is a diagonal unitary matrix containing eigenvalues of \(U\). Crucially, \(V\) and \(\Lambda(t)\) are independently distributed \cite{Pseudo2024}.

To analyze the information leaked about \(U\) through the adversary’s query dataset \(Q_{\mathcal{A}}\), we separate the analysis into contributions from \(V\) and \(\Lambda(t)\), exploiting their independence.

First, we analyze the amount of information about \( V \) that the adversary can gain from their queries. Given that the adversary's query set \(Q_{\mathcal{A}}\) has at most \(|Q_{\mathcal{A}}| \ll D\) elements, the queried states span at most a \(|Q_{\mathcal{A}}|\)-dimensional subspace. As the columns of \(V\) form a complete orthonormal basis, we conservatively assume the adversary obtains complete knowledge of the subspace spanned by up to \(|Q_{\mathcal{A}}|\) columns of \(V\). Following arguments from Theorem 3 in \cite{Ghosh2024}, the remaining \(D - |Q_{\mathcal{A}}|\) columns of \(V\) form a Haar-random unitary \(W\) acting on the complementary \((D - |Q_{\mathcal{A}}|)\)-dimensional subspace. Thus, adversarial queries effectively reduce \(V\) to a Haar-random unitary \(W\) of dimension \(D - |Q_{\mathcal{A}}|\).

Finally, we analyze the amount of information about \( \Lambda(t) \) that the adversary can gain from their queries. Similarly, assuming the adversary knows all eigenvalues corresponding to the collected columns of \(V\), the diagonal eigenvalue matrix \(\Lambda(t)\) effectively reduces to a smaller diagonal matrix \(\Lambda'(t)\) of dimension \(D - |Q_{\mathcal{A}}|\), described by a suitable joint density over the remaining eigenvalues.

Consequently, the distribution of the QPUF unitary \(U\) effectively reduces to a \((D - |Q_{\mathcal{A}}|)\)-dimensional random unitary \(S\), explicitly represented as:
\begin{equation}
    S = W \Lambda'(t) W^{\dagger},
\end{equation}
where \(W\) is a Haar-random unitary and \(\Lambda'(t)\) is the modified diagonal eigenvalue matrix. Substituting the eigen decomposition of $S$ from above in place of $U$, we have 
   \begin{align}
&\mathbb{E}_{U}\Tr[\rho^{i}_{\mathcal{A}}U\ketbra{m_{i}}U^{\dagger}] \notag \\
&\leq \mathbb{E}_{W,\Lambda'(t)} \Tr[W\Lambda'(t)W^{\dagger} \ketbra{m_i}{m_i}W\Lambda^{'\dagger}(t)W^{\dagger} \rho_{\mathcal{A}}^{i}]
   \end{align}
Without loss of generality and for notational simplicity, let the measurement outcome be \( m_i = k \in \mathbb{Z}_{D} \), and denote \(\rho_{\mathcal{A}}^{i}\) by \(\rho\). Thus, have,
\begin{align}
&\mathbb{E}_{U}\Tr[\rho^{i}_{\mathcal{A}}U\ketbra{m_{i}}U^{\dagger}] \notag \\ 
&\leq \mathbb{E}_{W,\Lambda'(t)} \Tr[W\Lambda'(t)W^{\dagger} \ketbra{k}{k}W\Lambda^{'\dagger}(t)W^{\dagger} \rho] \notag\\
&= \mathbb{E}_{W,\Lambda'(t)} \left[\sum W_{ij} \Lambda'(t)_{jn}\Bar{W}_{mn}\delta_{mk}W_{kl}\Lambda^{'\dagger}(t)_{lq}\Bar{W}_{pq}\rho_{pi}\right] \notag \\
&= \mathbb{E}_{\Lambda'(t)} \Bigg[\sum \left( \int W_{ij}W_{kl}\Bar{W}_{mn}\Bar{W}_{pq}
 dW \right) \delta_{mk}  \times \notag \\ 
&\hspace{1.5cm}\Lambda'(t)_{jn}\Lambda^{'\dagger}(t)_{lq} \rho_{pi} \Bigg]
\end{align}
where $\Bar{W}$ denotes complex conjugation of the matrix $W$. The integral in parentheses can be expressed using standard results involving Weingarten functions \cite{weingarten1978asymptotic}. By defining \(\tilde{D} = D - |Q_{\mathcal{A}}|\) and evaluating this integral explicitly, we obtain the following bound:
\begin{align}
&\mathbb{E}_{U}\Tr[\rho^{i}_{\mathcal{A}}U\ketbra{m_{i}}U^{\dagger}] \notag \\
&\leq \frac{1}{\Tilde{D}^{2} - 1} \left(\mathbb{E}_{\Lambda'(t)} \left[\sum_{pq} e^{i(E_{p} - E_q)t} \right] \right) + \frac{\Tilde{D}}{\Tilde{D}^{2} - 1} \label{eq: pre-bound}
\end{align}
It is well-established \cite{beenakker1997random,cotler2017chaos} that the eigenvalues of a Hamiltonian drawn from GUE exhibit \emph{level repulsion}, meaning that for any pair of eigenvalues \( E_p, E_q \), we certainly have \( E_p \neq E_q \). Consequently, as noted in \cite{beenakker1997random}, the eigenvalues can be unambiguously ordered along the real line. This ordering is illustrated in Fig. \ref{fig:level spacing}.

\begin{figure}
    \centering
    \includegraphics[scale = 0.48]{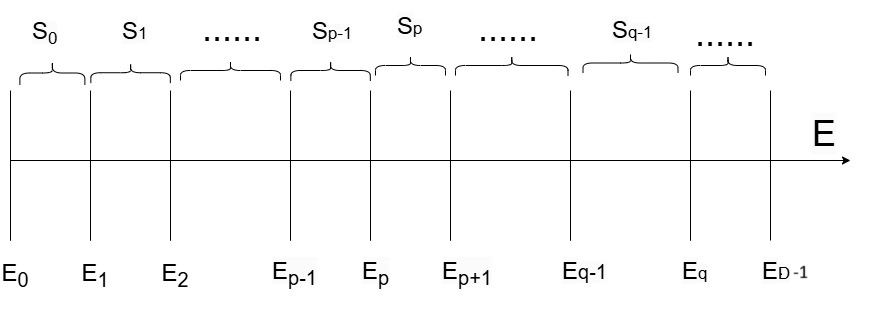}
    \caption{Level spacing statistics. The spacing between the \( p^{\text{th}} \) and \( (p+1)^{\text{th}} \) energy levels is denoted by \( S_p \).}
    \label{fig:level spacing}
\end{figure}

We define the level spacing between consecutive eigenvalues as \( S_p := E_{p+1} - E_p \). Hence, for any pair \( p < q \), the energy difference can be written as a sum of level spacings:
\[
E_q - E_p = \sum_{\alpha = p}^{q-1} S_{\alpha}.
\]
Let \( \gamma \) denote the average spacing between adjacent energy levels. We then define the normalized spacings \( s_\alpha := \frac{S_\alpha}{\gamma} \), whose distribution is governed by the Wigner surmise \cite{beenakker1997random}:
\[
P(s_\alpha) \propto s_\alpha e^{-\frac{\pi s_\alpha^2}{4}}.
\]

Without loss of generality, we now set the mean level spacing \( \gamma = 1 \), which will simplify the expressions. As previously demonstrated in \eqref{eq: eigen independence}, the eigenvalues become asymptotically uncorrelated in the large $\tilde{D}$ limit. This approximate independence implies that the nearest-neighbor spacings \( \{S_\alpha\} \) are also weakly correlated. Thus, the joint distribution of spacings approximately factorizes:
\[
P(S_0, S_1, \ldots, S_{\tilde{D}-1}) \approx \prod_{\alpha = 0}^{\tilde{D}-1} P(S_\alpha).
\]
We now compute the expectation
\[
\mathbb{E}_{\Lambda'(t)} \left[\sum_{p,q} e^{i(E_p - E_q)t} \right],
\]
by rewriting the energy differences in terms of level spacings:
\begin{align}
&\mathbb{E}_{\Lambda'(t)} \left[\sum_{p,q} e^{i(E_p - E_q)t} \right], \notag \\
&= \mathbb{E}_{\Lambda'(t)} \left[ \sum_{p< q} e^{i(E_p - E_q)t} + e^{-i(E_p - E_q)t} + \sum_{p = q} e^{i(E_p - E_q)t} \right], \notag \\ 
&= \mathbb{E}_{\{S_\alpha\}} \left[\sum_{p < q} e^{i\left(\sum_{\alpha = p}^{q-1} S_\alpha\right)t} +e^{-i\left(\sum_{\alpha = p}^{q-1} S_\alpha\right)t}\right] + 0, \notag \\ &[\because \Pr(E_p = E_q ) = 0, \textrm{due to level repulsion \cite{beenakker1997random}}] \notag \\ 
&= \sum_{p<q} \prod_{\alpha = p}^{q-1} \int_{0}^{\infty} (e^{i S_\alpha t}+ e^{-i S_\alpha t}) P(S_\alpha)\, dS_\alpha, \notag \\
&=  \sum_{p<q} \prod_{\alpha = p}^{q-1} \int_{0}^{\infty} \cos(S_{\alpha}t) P(S_\alpha) dS_\alpha.
\end{align}
Noting that the variable $S_{\alpha}$ will be integrated out, we let,
\begin{equation}
    \int_{0}^{\infty} \cos(S_{\alpha}t) P(S_\alpha) dS_\alpha = \chi (t),
\end{equation}
we get,
\begin{align}
    \mathbb{E}_{\Lambda'(t)} \left[\sum_{p,q} e^{i(E_p - E_q)t} \right] = \sum_{p < q} \chi^{q-p}(t)
\end{align}
For $t > 0$, we can bound $\chi(t) \leq \frac{2}{t}$, (see \cref{sec: Appendix}, \cref{eq:chi_time_bound}). Plugging this bound, we have,
\begin{align}
     \mathbb{E}_{\Lambda'(t)} \left[\sum_{p,q} e^{i(E_p - E_q)t} \right] &= \sum_{(q-p) \equiv d = 1}^{\tilde{D} - 1} (\tilde{D} - d) \frac{1}{t^d} \notag \\
     &\leq \tilde{D} \sum_{d = 1}^{\infty} \frac{1}{t^d} \notag \\
     &= \frac{\tilde{D}}{t - 1}
\end{align}
Finally, we have,
\begin{align}
&\mathbb{E}_{U}\Tr[\rho^{i}_{\mathcal{A}}U\ketbra{m_{i}}U^{\dagger}] \notag \\
&\leq \left( \frac{1}{t-1} \right)\cdot \left(\frac{2\tilde{D}}{\tilde{D}^{2} - 1} \right) \notag \\ 
&= \left( \frac{1}{\lambda -1} \right)\cdot \mathcal{O}(\frac{1}{\tilde{D}}), \; [\textrm{setting}\; t = \lambda \geq 2] \notag \\
&= \negl(\lambda)
\end{align}

This completes the proof.
\end{proof}
The proof of the above theorem shows that the evolution time $t$ can be made public and it can scale linearly in the security parameter $\lambda$. We further establish that the chaotic Hamiltonian QPUF is also \textit{selectively unforgeable}.
\begin{theorem}(Selective Unforgeability Chaotic QPUF)
    Any chaotic unitary QPUF is selectively unforgeable.
\end{theorem}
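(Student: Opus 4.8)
The plan is to reduce the statement to an overlap estimate that the proof of \cref{Thm:GUE security} already supplies. For a given challenge $\rho^{C}_{k}$ the genuine response is $\rho^{R}_{k} = U\ketbra{\phi_{k}}{\phi_{k}}U^{\dagger}$ with $\ket{\phi_{k}} \coloneqq U_{k}\ket{0}$ (cf.\ \cref{eq: Sel Res}), and any adversary-prepared state $\rho^{*}$ is accepted by the SWAP test against $\rho^{R}_{k}$ with probability at most $\frac12 + \frac12\Tr[\rho^{*}\rho^{R}_{k}]$ (cf.\ \cref{eq: MB-QPUF prob}). Hence it suffices to show that, for the worst-case $\rho^{*}$ the adversary can build from its query database $Q_{\mathcal{A}}(U)$ together with public knowledge of $\ket{\phi_{k}}$, the expectation $\mathbb{E}_{U\sim\mathrm{GUE}}\!\left[\Tr[\rho^{*}\,U\ketbra{\phi_{k}}{\phi_{k}}U^{\dagger}]\right]$ is negligible in $\lambda$: this quantity measures how far above the trivial value $\frac12$ a single SWAP test can be pushed, and its $M$-fold product, with $M=\mathrm{poly}(\lambda)$, governs the full verification protocol, so its negligibility gives overall forgery probability $\le 2^{-M} + \negl(\lambda) = \negl(\lambda)$ exactly as in \cite{Ghosh2024}.

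The key observation is that the chaotic (GUE) unitary ensemble $U = V\Lambda(t)V^{\dagger}$ of \cref{eq: eigen decomp} is invariant under conjugation $U \mapsto RUR^{\dagger}$ for any fixed unitary $R$, since $RV$ is again Haar-random and $\Lambda(t)$ is untouched. Choosing $R$ with $R\ket{0} = \ket{\phi_{k}}$ therefore identifies $\mathbb{E}_{U}\!\left[\Tr[\rho^{*}\,U\ketbra{\phi_{k}}{\phi_{k}}U^{\dagger}]\right]$ with $\mathbb{E}_{U}\!\left[\Tr[(R^{\dagger}\rho^{*}R)\,U\ketbra{0}{0}U^{\dagger}]\right]$, which is precisely the quantity bounded in the proof of \cref{Thm:GUE security} --- the measurement outcome is now $m_i=0$ and the adversarial state is $R^{\dagger}\rho^{*}R$, still a function of $Q_{\mathcal{A}}(U)$ and public data only. (Alternatively, one may skip the invariance argument and rerun that proof verbatim with $\ket{\phi_{k}}$ in place of $\ket{m_i}$: the Weingarten contractions leading to \cref{eq: pre-bound} only use that $\ketbra{\phi_{k}}{\phi_{k}}$ is a rank-one projector, so the bound is numerically unchanged.) Either way, setting $\tilde D = D - |Q_{\mathcal{A}}(U)|$ and reusing the asymptotic eigenvalue-independence \cref{eq: eigen independence}, the level-repulsion ordering of \cref{fig:level spacing}, and the vanishing of each Fourier integral $\int e^{iS_{\alpha}t}P(S_{\alpha})\,dS_{\alpha}$ by integration by parts, one inherits $\mathbb{E}_{U}\!\left[\Tr[\rho^{*}\,U\ketbra{\phi_{k}}{\phi_{k}}U^{\dagger}]\right] \le \frac{2\tilde D}{\tilde D^{2}-1} = \mathcal{O}\!\left(\frac{2}{d^{\lambda}-\mathrm{poly}(\lambda)}\right) = \negl(\lambda)$, which gives selective unforgeability.

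I expect the main obstacle to be the same modelling reduction as in \cref{Thm:GUE security}: arguing rigorously that the adversary's queries --- together with the extra side information of the public, Haar-random (but $U$-independent) challenge $\ket{\phi_{k}}$ --- ``consume'' at most a $\mathrm{poly}(\lambda)$-dimensional block of the eigenvector matrix $V$ and of the spectrum, leaving the complementary $\tilde D$-dimensional block Haar-random with the restricted joint eigenvalue density intact, so that $\tilde D \ge D - \mathrm{poly}(\lambda)$ survives even across all $M$ challenges. One should also confirm that the adversary holding the challenge as a quantum state rather than a classical description buys nothing, since a single copy permits only an imperfect estimate of $\ket{\phi_{k}}$ and the overlap bound above is uniform over adversarial states in any case. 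Granting this reduction, the Weingarten bookkeeping and the level-spacing cancellation are routine, being identical to those already carried out for \cref{Thm:GUE security}.
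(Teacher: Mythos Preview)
Your proposal is correct but takes a genuinely different route from the paper. The paper's proof never touches the GUE spectral machinery: it observes that, since the verifier's challenge-generating unitary $U_k$ is Haar-random and independent of $U(t)$, left invariance of the Haar measure gives $U(t)U_k \sim U_k$, so the response state $\rho_{\mathrm{sel}} = U(t)U_k\ketbra{0}{0}U_k^\dagger U(t)^\dagger$ is itself a Haar-random pure state whose distribution is \emph{independent of $U(t)$}. The overlap bound $\mathbb{E}[\Tr(\rho_{\mathrm{sel}}\rho_{\mathcal{A}})] \le 1/(D-|Q_{\mathcal{A}}|)$ is then read off directly from Theorem~3 of \cite{Ghosh2024}, with no Weingarten calculation and no level-spacing cancellation required.

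Your approach instead exploits the conjugation invariance $U \sim RUR^{\dagger}$ of the GUE ensemble to rotate $\ket{\phi_k}$ to $\ket{0}$ and thereby reduce to the setting of \cref{Thm:GUE security}, after which you rerun the full spectral argument. This is heavier but more self-contained within the present paper (it appeals only to \cref{Thm:GUE security} rather than to an external result), and your parenthetical alternative---that the Weingarten contractions leading to \cref{eq: pre-bound} only require $\ketbra{\phi_k}{\phi_k}$ to be a rank-one projector---is also correct and would shortcut the invariance step. The paper's route, by contrast, has the conceptual advantage of making clear that selective unforgeability in this protocol holds for \emph{any} QPUF unitary, chaotic or not: the randomness doing the work is that of the verifier's Haar-sampled $U_k$, not of $U(t)$, which is why none of the GUE-specific structure (level repulsion, Wigner surmise, \cref{eq: eigen independence}) is needed here.
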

\begin{proof}
Consider an arbitrary response state \( \rho_{\mathrm{sel}} \) that the adversary aims to guess in a selective unforgeability scheme. From \cref{eq: Sel Res}, without loss of generality, we have,
\begin{equation}
    \rho_{\mathrm{sel}} = U(t)\, U_k \ketbra{0}{0} U_k^\dagger\, U(t)^\dagger,
\end{equation}
where \( U(t) \) denotes the chaotic QPUF unitary and \( U_k \sim \mu_{\mathrm{Haar}} \) is a unitary drawn from the Haar measure on \( \mathrm{U}(D) \).

By the left invariance of the Haar measure, the composition \( U(t) U_k \) is distributed identically to \( U_k \), i.e., \( U(t) U_k \sim U_k \). Thus, the distribution of \( \rho_{\mathrm{sel}} \) is independent of \( U(t) \), and \( \rho_{\mathrm{sel}} \sim U_k \ketbra{0}{0} U_k^\dagger \).

Let the adversary's query database have size \( |Q_{\mathcal{A}}| \), and let \( \rho_{\mathcal{A}} \) denote the adversary’s guess state. Then, by Theorem 3 of \cite{Ghosh2024}, we have the following upper bound on the expected overlap:
\begin{equation}
    \mathbb{E}_{\rho_{\mathrm{sel}}} \left[\mathrm{Tr}\left(\rho_{\mathrm{sel}} \rho_{\mathcal{A}}\right)\right] \leq \frac{1}{D - |Q_{\mathcal{A}}|}.
\end{equation}
as $D = d^{\lambda}$ and $|Q_{\mathcal{A}}| = poly(\lambda)$, the probability decays exponentially with the security parameter $\lambda$. 
This concludes the proof.
\end{proof}

In the following sections, we will discuss two methods for implementing our QPUF model.

\begin{figure*}[t!]
    \centering
    \begin{subfigure}[t]{0.5\textwidth}
        \centering
    
    \includegraphics[scale = 0.29]{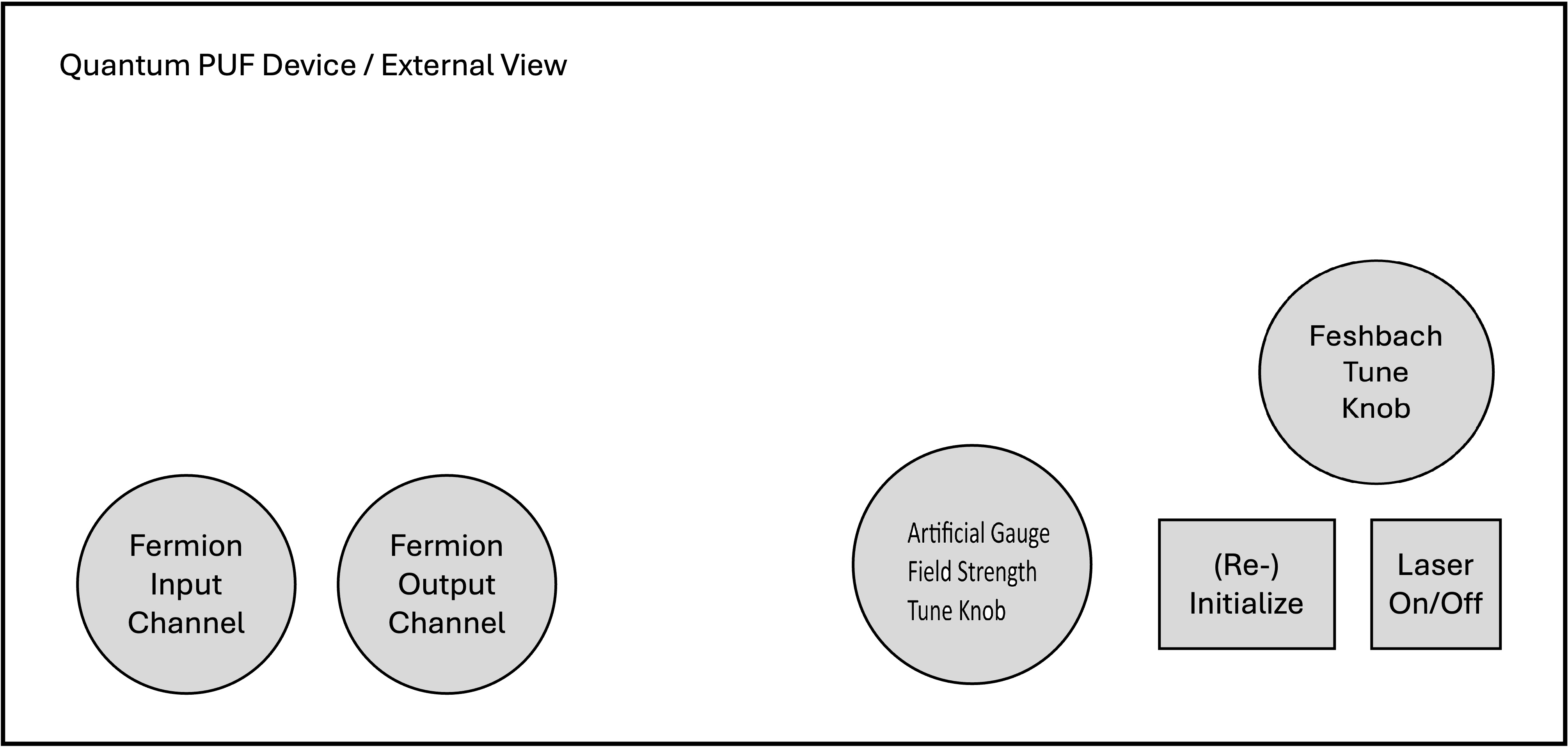}
        \caption{External view. Fermionic input/output channels are shown, enabling particle exchange. The gauge field tuning knob adjusts the system to induce positive or complex hopping parameters. The Feshbach tuning knob controls the interaction strength of the impurities. The laser activation button generates the Kagome lattice structure, while the re-initialize button resets the device for a new application, effectively changing the device as a new QPUF.}
        \label{fig: External}
    \end{subfigure}%
    ~
    \begin{subfigure}[t]{0.5\textwidth}
        \centering
        \includegraphics[scale = 0.29]{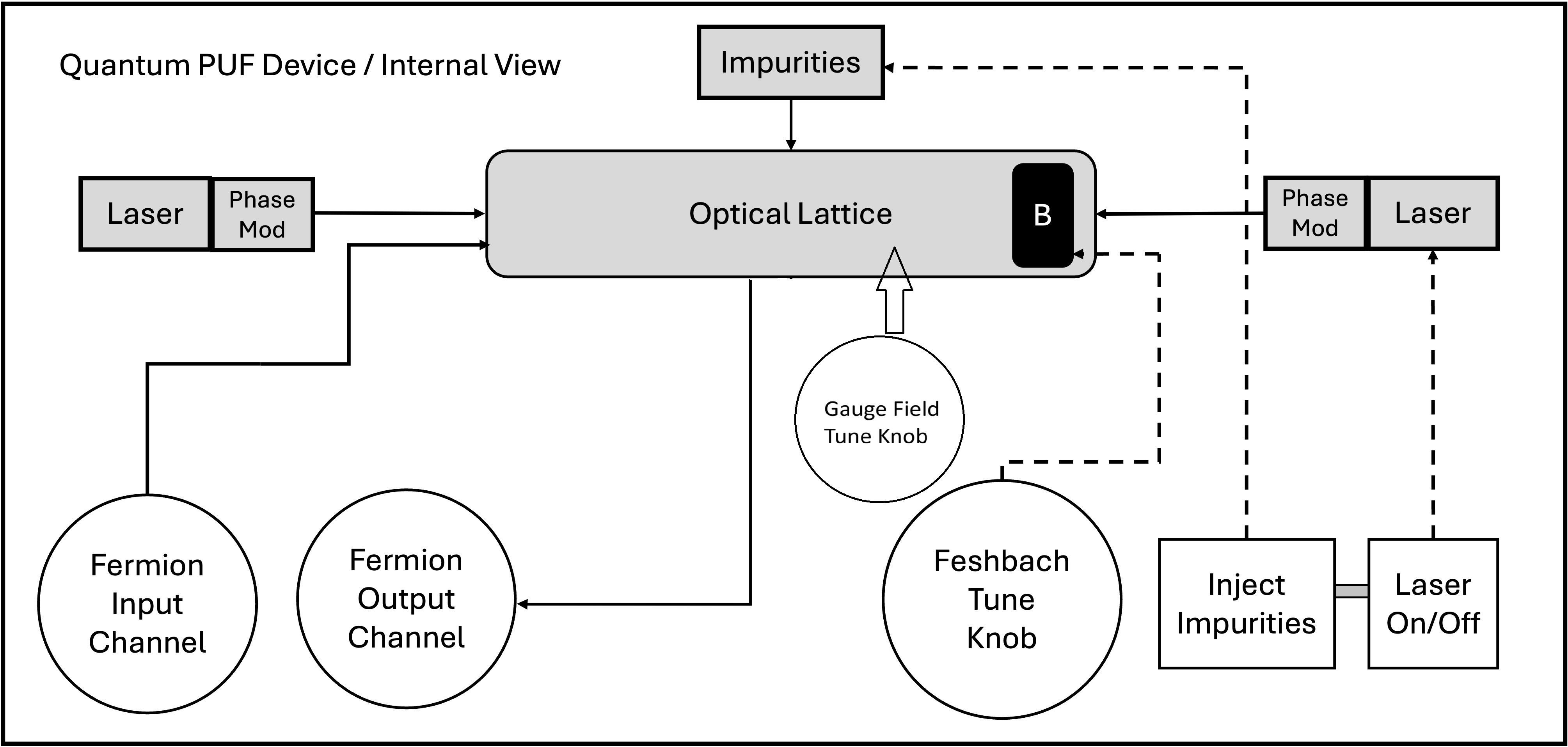}
        \caption{Internal View. The laser on/off switch traps the fermions in a Kagome lattice. It is connected with the impurity injector to create a disordered medium. After the time evlution is completed, the switch is turned off and the fermions are collected from the output channel.}
        \label{fig: Internal}
    \end{subfigure}
    \caption{QPUF device architecture. The left panel displays the external view, highlighting user controls and interface elements, while the right panel shows the internal view, detailing the underlying operational components.}
    \label{Fig: QPUF Device Architecture}
\end{figure*}

\section{Physical Hamiltonian Construction}
\label{Sec:V}
A concrete physical realization of the QPUF device can be modeled using the Sachdev-Ye-Kitaev (SYK) Hamiltonian ($ \mathcal{H}_{\textrm{SYK}}$)\cite{sachdevChaos, kitaeChaos}, which describes a system of randomly interacting fermions:
\begin{equation}
    \mathcal{H}_{\textrm{SYK}} = - \mu \sum_{i} c_{i}^{\dagger}c_{i} + \sum_{i>j , k>l} J_{ijkl} c^{\dagger}_{i}c^{\dagger}_{j}c_{k}c_{l},
\end{equation}
where \( c_i^\dagger \) and \( c_i \) are fermionic creation and annihilation operators acting on a Hilbert space of dimension $D = d^n$ for a $n$-qudit system of qudit dimension $d$. The coefficients \( J_{ijkl} \) are complex-valued random couplings drawn from a Gaussian distribution with zero mean and variance \( \langle |J_{ijkl}|^2 \rangle = \frac{J^2}{2d^3} \). The parameter \( \mu \) denotes the chemical potential.

In this construction, the fermionic modes act as the computational degrees of freedom (qubits or qudits). The QPUF protocol begins with initializing these fermions in a fixed input state. The system then undergoes unitary time evolution under \( \mathcal{H}_{\textrm{SYK}} \), generating an output state that is effectively random due to the underlying disorder in the couplings.

Importantly, prior work \cite{OpticalCavity_Chaos} has shown that by tuning the chemical potential \( \mu \), the system exhibits a phase transition between chaotic and non-chaotic regimes. To ensure unpredictability and cryptographic hardness, an honest user can calibrate \( \mu \) (for e.g. with external magnetic field) to maintain the Hamiltonian in the chaotic phase, thus guaranteeing strong pseudorandomness in the output state, even when the QPUF device is sourced from an untrusted manufacturer. 

The SYK model has been successfully simulated in laboratory settings using nuclear spins \cite{SYK_Nuc} and ultracold atoms in optical lattices \cite{SYK_Optical}, demonstrating the practical feasibility of our proposed QPUF construction.

\subsection{Device Architecture Schematic}

In this subsection, we outline a high-level QPUF device architecture inspired by the experimental realization presented in \cite{SYK_Optical}, which serves as a proof of concept.

A central challenge in implementing the SYK model is its inherently non-local interactions. To emulate SYK-type physics within a spatially local and physically realizable system, the authors of \cite{SYK_Optical} propose an optical Kagome lattice Hamiltonian with nearest-neighbor interactions and randomly distributed impurities across lattice sites. The impurities create a delta-function type potential barrier.

They demonstrate that this model effectively reduces to a SYK Hamiltonian. In other words, they show (refer to eq.7 in \cite{SYK_Nuc}), that the system dynamics can be described by an effective Hamiltonain $\mathcal{H}_{eff}$, which is given by,
\begin{equation}
    \mathcal{H}_{eff} = (2t - \mu) \sum_{i} c_i^{\dagger}c_{i} + \sum_{i>j, k>l} J_{ijkl} c_{i}^{\dagger} c_{j}^{\dagger}c_k c_l
\end{equation}
which is exactly the SYK Hamiltonian. While each experimental component of this setup is now standard, integrating them into a single platform may still require additional effort.

The critical components are as follows:
\begin{enumerate}
    \item \textbf{Random on-site potentials:} Realized by heavy atoms randomly loaded into the optical lattice. Their interaction strength can be tuned using Feshbach resonances \cite{hara2014three}.
    \item \textbf{Controlled hopping (positive or complex):} 
    Achieved through artificial gauge fields generated by either (a) a zero-averaged homogeneous inertial force \cite{struck2012tunable} or (b) Raman-assisted tunneling in an asymmetric Kagome lattice \cite{aidelsburger2011experimental}.
\end{enumerate}
By tuning the strengths of the Feshbach resonances and the artificial gauge fields, the system can be driven into the chaotic phase, as demonstrated in \cite{SYK_Optical}.

Combining all components, we present the schematic architecture of the QPUF device in \cref{Fig: QPUF Device Architecture}. 

The external layout(see \cref{fig: External}) includes the following components:
\begin{itemize}
    \item \textbf{Fermionic qudit input–output channel:} Facilitates the injection and collection of fermionic qudits into and out of the device.
    \item \textbf{Feshbach resonance control:} A tuning knob for adjusting the impurity potentials by modifying interaction strengths via Feshbach resonances.
    \item \textbf{Artificial gauge field controller:} Regulates the hopping amplitude of fermionic qudits by tuning the strength of synthetic gauge fields.
    \item \textbf{Reinitialization button:} Resets all field strengths to predefined initialization values. This feature enables the same device to be reconfigured for use as multiple distinct QPUFs, supporting multiple users.
    \item \textbf{Laser on/off switch:} Activates the formation of the optical Kagome lattice when fermionic qudits enter the device. When switched off, the lattice is turned off, allowing fermions to propagate freely and be collected at the output.
\end{itemize}
The internal layout(see \cref{fig: Internal}) of the device comprises:
\begin{itemize}
    \item \textbf{Laser array with phase modulators}: Generates the optical Kagome lattice by interfering laser beams with tunable phase control.
    \item \textbf{Impurity injectors}: Introduce heavy atoms at random lattice sites to create local impurity potentials. Upon completion of the evolution, the impurities can be reabsorbed into the injectors. These injectors are synchronized with the laser switch(see \cref{fig: Internal}), ensuring that impurities are only introduced when the lattice is active.
\end{itemize}

\begin{figure}[t]
\centering
\begin{tikzpicture}
\begin{axis}[
    width=\columnwidth,        
    height=0.55\columnwidth,   
    xlabel={$p$ (Probability)},
    ylabel={$\lambda = -\log_2(p)$},
    xmode=log,
    log basis x=10,
    grid=both,
    minor grid style={gray!25},
    major grid style={gray!50},
    enlargelimits=true,
    xtick={0.001, 0.01, 0.1, 1},
    ytick={0,2,4,6,8,10,12}
]
\addplot[
    domain=0.001:1,
    samples=200,
    thick,
    blue
]
{-log2(x)};
\end{axis}
\end{tikzpicture}
\caption{Qubit case: Number of qubits of the QPUF system ($\lambda$) as a function of the adversary’s success probability \(p\) in forging the QPUF, where \(\lambda = -\log_2(p)\).}
\label{Fig:Resource}
\end{figure}
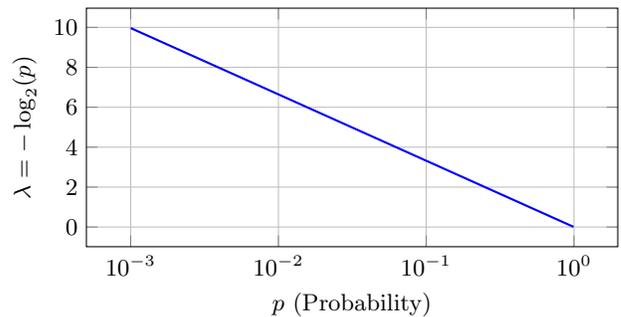

\subsection{Resource Estimate}

In this section, we provide a preliminary estimate of the resources required to construct a secure QPUF based on the SYK Hamiltonian. Resources can be broadly classified into space and time costs. In this work, we focus solely on estimating the space cost, while the time cost depends heavily on the specifics of the implementation. We leave its characterization as an open problem for future experimental realizations.

From the security result in \cref{Thm:GUE security}, we know that for a single round of verification, the adversary’s success probability $p$ is bounded by

$$
    p \leq \frac{1}{2^{M}} + \mathrm{negl}(\lambda),
$$

where the negligible term $\negl(\lambda)$ scales inversely with the system's dimension. For a system composed of $\lambda$ qubits, this can be approximated as

$$
    p \leq \frac{1}{2^{M}} + \frac{1}{2^{\lambda}}.
$$

Since $M$ can be somewhat chosen arbitrarily to suppress the first term, the dominant factor becomes the $\lambda$-dependent term. This term governs the scalability of the QPUF and serves as a rough estimate of the quantum resources needed. Naturally, increasing the number of qubits makes the system more secure but also more challenging to implement.

In \cref{Fig: Resource}, we plot the adversary’s success probability as a function of the number of qubits $\lambda$, offering a visual representation of the resource requirements. While each component of the proposed device architecture relies on standard experimental techniques, integrating them into a unified platform remains to be done and may require additional efforts and incur additional costs. Consequently, we currently lack a concrete estimate for the maximum number of qubits our architecture can support. Although the architecture is theoretically scalable, understanding its practical limits remains an open problem.


\section{Pseudo-chaotic Hamiltonian Simulation}
\label{Sec:VI}

Pseudo-chaotic Hamiltonian ensembles were first introduced and rigorously defined in \cite{Pseudo2024}. For clarity, we restate it in a form more convenient for our analysis. Before defining a pseudo-chaotic Hamiltonian ensemble, we first clarify what it means to have black-box query access to a Hamiltonian $H$. This refers to the ability to collect input-output pairs from its time evolution \( e^{-iHt} \). We now proceed to define the pseudo-chaotic Hamiltonian.

\begin{definition}[Pseudo-Chaotic Hamiltonian Ensemble]
Let \( \mathcal{E}^{D}_{\mathrm{GUE}} \) denote an ensemble of chaotic Hamiltonians drawn from the Gaussian Unitary Ensemble (GUE) of dimension \( D \). A \emph{pseudo-chaotic} Hamiltonian ensemble \( \mathcal{E}^{D}_{\mathrm{PGUE}} \) is any ensemble that satisfies the following two conditions:
\begin{itemize}
    \item It does not exhibit any characteristics of quantum chaos noted in \cref{table:chaos_probes}.
    \item It is computationally indistinguishable from \( \mathcal{E}^{D}_{\mathrm{GUE}} \) by any efficient algorithm \( \mathcal{A}^H \) with only black-box access to the Hamiltonian. In other words, if \( \mathcal{A}^H \) is any algorithm that checks if an Hamiltonian is GUE or not, that is, outputs $1$, if it is GUE and $0$ if not, then we have,
    \begin{align}
        &\left| \Pr_{H \sim \mathcal{E}^{D}_{\mathrm{GUE}}}[A^H(\cdot) = 1] 
        - \Pr_{H \sim \mathcal{E}^{D}_{\mathrm{PGUE}}}[A^H(\cdot) = 1] \right| \notag \\ 
        &= \negl(n).
    \end{align}
\end{itemize}
\end{definition}
By definition, this gives a valid QPUF construction against quantum polynomial-time (QPT) adversaries with only query access. The chaotic and pseudo-chaotic Hamiltonian ensembles are computationally indistinguishable based on query access. As a result, their corresponding time evolution operators are also indistinguishable to any efficient algorithm.

\begin{table*}
\centering
\begin{tabular}{|c|c|}
\hline
\textbf{Probe to chaos} & \textbf{Definition} \\
\hline
4-point OTOC  & $\text{OTOC}(H,t) := \frac{1}{d} \text{tr}[O_1(t) O_2 O_1(t) O_2]$ \\
\hline
2-Rényi entropy  & $S_{A|B}(H,t;\rho_0) := -\log \text{tr}_A\left[(\text{tr}_B(e^{-iHt} \rho_0 e^{iHt}))^2\right]$ \\
\hline
Operator entanglement  & $\text{LOE}(H,t) := S_2(\text{tr}_B B', |O_1(t)\rangle \langle O_1(t)|)$ \\
\hline
Stabilizer entropy & $M_\alpha(H,t;\rho_0) := \frac{1}{1-\alpha} \log \frac{1}{d} \sum_P \text{tr}^{2\alpha}\left(P e^{-iHt} \rho_0 e^{iHt}\right)$ \\
\hline
\end{tabular}
\caption{Probes to chaos considered in \cite{Pseudo2024}. Chaos is typically associated with systems for which there exists times $t = O(\text{poly} \, n)$ such that the 2-Rényi entropy, operator entanglement, and stabilizer entropy are extensive in system size, and the 4-point OTOC is exponentially vanishing in system size.}
\label{table:chaos_probes}
\end{table*}

An efficient algorithm for constructing a pseudo-chaotic Hamiltonian ensemble was proposed in \cite{Pseudo2024}. For completeness, we briefly summarize the key steps of the construction.

Any GUE Hamiltonian \( H_{\mathrm{GUE}} \) admits the spectral decomposition:
\begin{equation}
    H_{\mathrm{GUE}} = U D U^\dagger,
\end{equation}
where \( U \) is a Haar-random unitary matrix and \( D \) is a diagonal matrix containing its eigenvalues. Let the total dimension of the Hamiltonian be \( D \).

The pseudo-chaotic construction proceeds as follows:
\begin{itemize}
    \item Sample \( U \) from a unitary \( t \)-design, which serves as an efficient approximation to a Haar-random unitary.
    \item Sample \( d = \mathrm{poly}(\lambda) \) eigenvalues independently from the Wigner semicircle distribution \cite{Pseudo2024}, where \( \lambda \) is an efficiency parameter.
    \item Construct the diagonal matrix \( D \) by repeating each sampled eigenvalue \( \frac{D}{d} \) times, making the spectrum highly degenerate.
\end{itemize}

This construction is computationally efficient for suitable values of \( \lambda \), since both \( t \)-design sampling and eigenvalue generation require only polynomial resources in \( \lambda \).

\section{Discussion and Future Research}
\label{Sec:VII}
We have shown that chaotic Hamiltonians enable secure QPUF constructions. In particular, we have proved that QPUFs based on chaotic Hamiltonians satisfy both \emph{measurement-selective unforgeability} \cite{Ghosh2024} and \emph{selective unforgeability} \cite{Arapinis2021quantumphysical}.

While the SYK model offers a compelling theoretical foundation, its experimental realization remains challenging due to its inherently non-local, all-to-all interactions. This raises important research questions: To what extent can such non-locality be practically implemented—i.e., how many qubits or qudits can a realistic SYK model support? How do we introduce fault tolerance into the picture? How would the processing time of the QPUF device scale with number of qudits? Moreover, since SYK is maximally chaotic, can we design more local, less chaotic Hamiltonians that still enable secure QPUF constructions?

Despite these challenges, SYK models have been simulated in physical systems such as nuclear spins~\cite{SYK_Nuc} and cold atom optical lattices~\cite{SYK_Optical}, paving the way for practical, chaos-based QPUF architectures.

Physical implementations are also critical for realistic security modeling. Most existing QPUF works assume black-box query access, but real-world adversaries may have physical access to the device. SYK-based QPUFs on physical platforms allow us to study these stronger adversarial models.

This highlights an important trade-off between physical implementation and pseudo-chaotic simulation. While physically constructing chaotic Hamiltonians is more challenging, it allows us to model stronger adversaries with different kinds of physical access, beyond the standard black-box query model. 

The pseudo-chaotic construction also raises an intriguing open question: since randomness arises both from the sampling of the \( t \)-design and from the eigenvalue distribution, does this additional layer of randomness offer any meaningful advantage?

Another promising direction is to compare how different physical platforms—such as cold atoms, superconducting qubits, or trapped ions—affect QPUF security and performance. Understanding these differences can guide practical implementations tailored to specific application needs.

\section*{Appendix}
\label{sec: Appendix}
Here we present the bound on the temporal form factor $\chi(t)$, for completeness, We recall that,
\begin{equation}
    \chi(t) \;=\; \int_{0}^{\infty} s\, e^{-\frac{\pi}{4}s^{2}} \cos(st)\, ds.
    \label{eq:chi_def}
\end{equation}
Setting \(a = \frac{\pi}{4}\), we write
\begin{equation}
    \chi(t) \;=\; \Re \int_{0}^{\infty} s\, e^{-a s^{2}} e^{i s t}\, ds.
\end{equation}
Applying integration by parts with 
\(u = s e^{-a s^{2}}\) and \(dv = e^{i s t} ds\),
we obtain
\begin{equation}
    \chi(t)
    = -\frac{1}{t} \int_{0}^{\infty} 
      (1 - 2 a s^{2})\, e^{-a s^{2}} \sin(st)\, ds,
\end{equation}
where the first term of the integration by-parts vanishes because \(s e^{-a s^{2}} \to 0\) as \(s \to \infty\).
Taking absolute values and using \(|\sin(st)| \le 1\) and \(|1 - 2 a s^{2}| \le 1 + 2 a s^{2}\),
we find
\begin{align}
    |\chi(t)|
    &\le \frac{1}{|t|} \int_{0}^{\infty} (1 + 2 a s^{2})\, e^{-a s^{2}} ds \notag \\
    &= \frac{1}{|t|}\, \sqrt{\frac{\pi}{a}}.
\end{align}
For \(a = \pi/4\), this yields the simple time–dependent bound
\begin{equation}
        |\chi(t)| \;\le\; \frac{2}{|t|}, \qquad t \neq 0.
    \label{eq:chi_time_bound}
\end{equation}

\bibliographystyle{IEEEtran}
\bibliography{bibliography}
\end{document}